\newtheorem{definition}{Definition}
\newtheorem{theorem}{Theorem}
\newtheorem{lemma}[theorem]{Lemma}
\begin{document}

\title{Matching-Theory-Based Multi-User Cooperative Computing Framework}
\author{Ya~Zhou, Guopeng~Zhang, Kezhi~Wang and Kun~Yang
\thanks{This work was supported in part by the National Natural Science Foundation of China under Grant 61971421, Grant 61620106011, Grant U1705263, Grant 61871076; and the Fundamental Research Funds for the Central Universities (Grant ZYGX2019J001). \textit{(Corresponding author: Guopeng Zhang.)}}
\thanks{Ya Zhou and Guopeng Zhang are with the Engineering Research
Center of Mine Digitalization, Ministry of Education, and the School of Computer Science and Technology, China University of Mining and Technology, Xuzhou 221116, China (e-mail: zhou\underline{ }ya@cumt.edu.cn; gpzhang@cumt.edu.cn).}
\thanks{Kezhi Wang is with the Department of Computer and Information Science, Northumbria University, Newcastle NE1 8ST, U.K. (e-mail: kezhi.wang@northumbria.ac.uk).}
\thanks{Kun Yang is with School of Electronic and Information Engineering, Nanjing University of Information Science and Technology, Nanjing, China, and also with School of Information and Communication Engineering, University of Electronic Science and Technology of China, Chengdu, China. (e-mail: kunyang@uestc.edu.cn).} }

\maketitle

\begin{abstract}
In this paper, we propose a matching theory based multi-user cooperative computing (MUCC) scheme to minimize the overall energy consumption of a group of user equipments (UEs), where the UEs can be classified into the following roles: resource demander (RD), resource provider (RP), and standalone UE (SU). We first determine the role of each UE by leveraging the roommate matching method. Then, we propose the college admission based algorithm to divide the UEs into multiple cooperation groups, each consisting of one RP and multiple RDs. Next, we propose the rotation swap operation to further improve the performance without deteriorating the system stability. Finally, we present an effective task offloading algorithm to minimize the energy consumption of all the cooperation groups. The simulation results verify the effectiveness of the proposed scheme.
\end{abstract}

\begin{IEEEkeywords}
  multi-user cooperative computing, matching theory, computing task offloading.
\end{IEEEkeywords}

\section{Introduction}
\IEEEPARstart{W}{ith} the rapid deployment of the computationally-intensive tasks, e.g., virtual reality, the requirement for UE in terms of battery life and computing resource are also increasing. Although traditional fixed infrastructure-based mobile edge computing (MEC) may help to provide computing resources to UEs, they may be inaccessible in some situations, like disasters or emergency cases where infrastructures are unavailable. MUCC \cite{chen_2017_ICC} has recently been proposed to allow a UE to utilize available computing resources from neighboring UEs. In the framework of MUCC, the UEs can be classified as one of the following roles: 1) RD, which has a computationally-intensive task to be processed; 2) RP, which has available computing resource to provide for other RDs; and 3) SU, which may do the task itself.

A one-to-one MUCC scheme was proposed in \cite{lin_2019_coml} to minimize the long-term energy consumption of two UEs. In \cite{wu_2018_GC}, a RD has the option to offload the task to a MEC server or a RP.
In \cite{cao_2019_iot}, a one-to-multiple MUCC scheme was proposed to allow an RD to partition its task into multiple parts and offload them to multiple RPs for parallel execution.
In \cite{sheng_2018_cc}, an RD is allowed to select one RP from a group of potential RPs. The purpose of the above works is mainly to reduce the energy consumption or execution delay of individual users. However, they do not address the following issues: (1) how to determine the role of a UE, i.e., as an RD, an RP, or a SU; (2) how to determine the association between the RDs and the RPs; and (3) how each RP allocates its computing and communication resources to the served RDs.

Against the above background, the main contribution of this paper is as follows: (1) The aim is to minimize the overall energy consumption of all UEs in a MUCC system through optimizing the user association and resource allocation. This problem is formulated as a mix integer nonlinear programming (MINLP) problem; (2) Due the high computational complexity of MINLP, we employ matching theory to design a low-complexity algorithm for finding the suboptimal solution of the problem. Specially, the energy consumption-related user preference functions are defined to solve the user role assignment and user association problems under the matching theory framework. Furthermore, a successive convex approximation (SCA) based task offloading algorithm is proposed to minimize the energy consumption of the whole system.

\section{System Model}
\begin{figure}[!t]
  \centering
  \includegraphics[width=0.85\linewidth, height=0.25\linewidth]{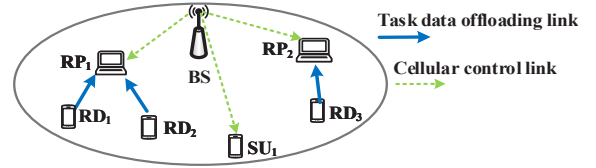}
  \caption{The task offloading process in a cooperation group.}
  \label{fig.system_model}
\end{figure}

As shown in Fig. \ref{fig.system_model}, we consider a MUCC system consisting of a set $\mathcal{N} \triangleq \left\{ 1, 2, \dots, N \right\} $ of $N$ UEs which can establish connections with each other via direct communications. Let $\mathcal{T} \triangleq \left\{ t = 1, 2, ... \right\}$ denote the sequence of the time slots. The length of slot $t$ ($ \forall t \in \mathcal{T} $) is $ \tau $. In slot $t$, UE $m$ ($ \forall m \in \mathcal{N} $) has a computing task to execute, which is represented by $\phi_m = \left( L_m, C_m \right)$, where $L_m$ (in bits) denotes the amount of task data to be processed, whereas $C_m$ (in CPU cycles/bit) denotes the number of CPU cycles required to be executed for each data bit. To complete task $\phi_m$ in a slot, the CPU frequency of UE $m$ is adjusted to $f_m=C_m L_m / \tau$, by using the dynamic voltage and frequency scaling (DVFS) technique \cite{lin_2019_coml}. Let $\gamma_m$ denote the effective capacitance coefficient of the CPU of UE $m$. Then, the energy spent by UE $m$ is as
\begin{equation}
  E^\mathbbm{S}_m = \gamma_m f_m^2 C_m L_m =  \gamma_m C_m^3 L_m^3 / \tau^2, \;\; \forall m \in \mathcal{N}.
  \label{equ.computing_energy}
\end{equation}

In any slot $t$, the UEs in set $\mathcal{N}$ can be divided into the following three subsets, i.e., the RD set $\mathcal{N}^{\mathbbm{D}}$, the RP set $\mathcal{N}^{\mathbbm{P}}$, and the SU set $\mathcal{N}^{\mathbbm{S}}$. Then, we have
\begin{align}
    & \mathcal{N}^x \cap \mathcal{N}^y = \emptyset, \;\; \forall x, y \in \left\{ \mathbbm{D}, \mathbbm{P}, \mathbbm{S} \right\}\ \text{and} \; x \neq y, \label{equ.cons_userset_1} \\
    & \mathcal{N}^{\mathbbm{D}} \cup \mathcal{N}^{\mathbbm{P}} \cup \mathcal{N}^{\mathbbm{S}} = \mathcal{N}. \label{equ.cons_userset_2}
\end{align}

Let $i$  $(\forall i \in \mathcal{N}^{\mathbbm{D}})$ denote the index of an RD, and $j$ $(\forall j \in \mathcal{N}^{\mathbbm{P}})$ the index of an RP. Any RD $i$ can partition its task $\phi_i$ into two parts. One part is processed locally, while the other part, with data size $l_{i,j}$ is offloaded to RP $j$ for cooperative computing. Then, we have
\begin{equation}
    0 \leq l_{i, j} \leq L_i, \;\; \forall i \in \mathcal{N}^{\mathbbm{D}}, \; \forall j \in \mathcal{N}^{\mathbbm{P}}.
    \label{equ.sr_offloading_size_cons}
\end{equation}

We define the indicator function $\mathbbm{1}_{0 \leq l_{i, j} \leq L_i} \in \left\{ 0, 1 \right\}$ to represent the association between the RDs and the RPs. If RD $i$ offloads part of its task $\phi_i$ to RP $j$, $\mathbbm{1}_{0 < l_{i, j} \leq L_i} = 1$, otherwise, $\mathbbm{1}_{l_{i, j} = 0} = 0$. In addition, RD $i$ can choose at most one RP $j$ to offload task $\phi_i$, that is
\begin{equation}
  \sum_{j \in \mathcal{N}^{\mathbbm{P}}} \mathbbm{1}_{l_{i, j}} \leq 1, \; \forall i \in \mathcal{N}^{\mathbbm{D}},
  \label{equ.sr_offloading_cons}
\end{equation}
In order to avoid allocating too many RDs to RP $j$ in a time slot, we preset a parameter $a_j$ $(a_j \geq 1)$ representing the maximum number of RDs that can be served by RP $j$ in a slot, that is
\begin{equation}
  \sum_{i \in \mathcal{N}^{\mathbbm{D}}} \mathbbm{1}_{l_{i, j}} \leq a_j, \; \forall j \in \mathcal{N}^{\mathbbm{P}}.
  \label{equ.sp_connection_cons}
\end{equation}
The set of RDs that are associated with RP $j$ can thus be denoted by $\mathcal{M}_j \triangleq \left\{ i \in \mathcal{N}^{\mathbbm{D}} \; | \; \mathbbm{1}_{l_{i, j}} = 1 \right\}$.

Since the UEs have limited battery energy, any UE $j$ can be used as a RP only when the available energy $E_j^{\text{Ax}}$ is greater than the threshold $E_j^{\text{min}}$. So we get the following constraint
\begin{equation}
    E_j^{\text{Ax}} \geq E_j^{\text{min}}, \; \forall j \in \mathcal{N}^{\mathbbm{P}}. \label{fig.total_cons}
\end{equation}

Next, we introduce the task offloading process in a cooperation group composed of RP $j$ and the associated RDs in set $\mathcal{M}_j$. As illustrated in Fig. \ref{fig.time_structure}, the task offloading will span the following two consecutive slots.

\begin{figure}[!t]
  \includegraphics[width=\linewidth]{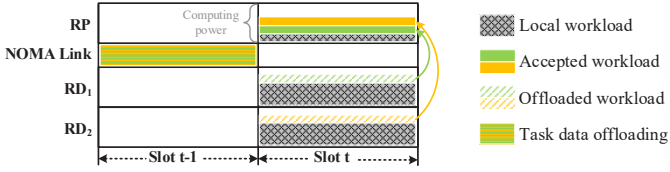}
  \caption{The task offloading process in a cooperation group.}
  \label{fig.time_structure}
\end{figure}

\underline{\textbf{Slot $t-1$}} (\emph{Task data offloading}):
In slot $t-1$, all the RDs in set $\mathcal{M}_j$ simultaneously transfer their task data to RP $j$ by using non-orthogonal multiple access (NOMA) \cite{wu_2018_GC}, while the system allocates orthogonal channels (of the same size of $w$ MHz) for different RPs to receive the task data. Let $p_{i,j}$ denote the transmit power of RD $i$ associated with RP $j$, and $g_{i,j}$ denote the channel gain between RD $i$ and RP $j$. When successive interference cancellation (SIC) \cite{wu_2018_GC} is used, the achievable data rate of RD $i$ is given by
\begin{align}
  r_{i, j} = w \log_2 \left( 1 + \frac{p_{i, j} g_{i, j}}{\sum_{n \in \mathcal{M}_j \backslash \left\{ i \right\}} p_{n, j}g_{n, j} + \sigma^2} \right),
\end{align}
where $\sigma^2$ is the noise power at the receiver of RP $j$. For correct receiving the task data of RD $i$ of $l_{i,j}$ bits, the lower bound of the transmit power of RD $i$ is obtained as \cite{wu_2018_GC}
\begin{align}
  p_{i, j} = \frac{\sigma^2}{g_{i, j}} \left( 2^{ \frac{l_{i, j}}{\tau w} } - 1 \right) 2^{\frac{1}{\tau w} \sum_{n \in \mathcal{M}_j \backslash \{ i \} } l_{n, j}} \leq P_i^\text{max}, \label{equ.transmit_power}
\end{align}
where $P_i^\text{max}$ is the maximum transmit power of RD $i$.

By using eq. \eqref{equ.transmit_power}, the energy consumption of RD $i$ in slot $t-1$ for task data offloading is obtained as
\begin{equation}
  E^\text{Tx}_{i, j} = p_{i, j} \tau, \;\; \forall i \in \mathcal{M}_j, \; \forall j \in \mathcal{N}^{\mathbbm{P}}.
  \label{equ.transmit_energy}
\end{equation}

\underline{\textbf{Slot $t$}} (\emph{Cooperative computing}): After collecting the task data of all the RDs in $\mathcal{M}_j$, RP $j$ can increase the CPU frequency as $f_j = \left( L_j + \sum_{i \in \mathcal{M}_j} \mathbbm{1}_{l_{i, j}} l_{i, j} \right) / \tau$ for completing the accepted tasks as well as its own task in slot $t$. The energy spent by RP $j$ for processing the tasks is given by
\begin{equation}
  E^\mathbbm{P}_j = \gamma_j C_j^3 \left( L_j + \sum_{i \in \mathcal{M}_j} \mathbbm{1}_{l_{i, j}} l_{i, j} \right)^3 / \tau^2, \;\; \forall j \in \mathcal{N}^\mathbbm{P}.
\end{equation}
Since the size of the result is small, the energy spent by RP $j$ for feeding back the result is ignored.

In the meanwhile, RD $i$ $(\forall i \in \mathcal{M}_j)$ can save its energy by decreasing the CPU frequency to $f_i = \left(L_i - \sum_{j \in \mathcal{N}_\mathbbm{P}} \mathbbm{1}_{l_{i, j}} l_{i, j} \right) / \tau$ for processing the remaining part of task $\phi_i$. Therefore, the energy spent by RD $i$ for local execution in slot $t$ is given by
\begin{equation}
  E^\text{Cx}_i = \gamma_i C_i^3 \left( L_i - \sum_{j \in \mathcal{N}_\mathbbm{P}} \mathbbm{1}_{l_{i, j}} l_{i, j} \right)^3 / \tau^2, \;\; \forall i \in \mathcal{N}^\mathbbm{D}.
\end{equation}

Then, the total energy spent by RD $i$ to complete task $\phi_i$ is given by
\begin{equation}
  E^\mathbbm{D}_i = E^\text{Cx}_i +\sum_{j \in \mathcal{N}^{\mathbbm{P}}} \mathbbm{1}_{l_{i, j}} E^\text{Tx}_{i, j}, \;\; \forall i \in \mathcal{N}^{\mathbbm{D}}.
\end{equation}

\section{Problem Presentation and Solution}
Our objective is to minimize the overall energy consumption of all the UEs in the system, which can be given by
\begin{alignat}{3}
    &\min_{l_{i, j}, \; \mathbbm{1}_{l_{i, j}}} \;\; && \sum_{i \in \mathcal{N}^{\mathbbm{D}}} E^\mathbbm{D}_i + \sum_{j \in \mathcal{N}^{\mathbbm{P}}} E^\mathbbm{P}_j + \sum_{n \in \mathcal{N}^{\mathbbm{S}}} E^\mathbbm{S}_n & \label{equ.min_function} \\
    &\mbox{s.t.} && \eqref{equ.cons_userset_1}, \; \eqref{equ.cons_userset_2}, \; \eqref{equ.sr_offloading_size_cons}, \; \eqref{equ.sr_offloading_cons}, \; \eqref{equ.sp_connection_cons}, \; \eqref{fig.total_cons}, \; \eqref{equ.transmit_power}. \notag
\end{alignat}

Problem \eqref{equ.min_function} is a MINLP, which is very difficult to solve in general. Next, we present an effective algorithm to find the sub-optimal solution of this problem based on matching theory. Note that we first do not consider constraint (7), thus allowing any UE to be used as a RP no matter how much energy it remains. This constraint will be dealt with in Sec. III-D by proposing a user role control operation.
Furthermore, it should be noted that the cooperative UEs in slot $(t-1)$ can perform the computing tasks accepted in slot $(t-2)$, and the communication link in slot $t$ can be used to transfer the task data required in slot $(t+1)$. Hence, problem (14) minimizes the energy consumption of the cooperative UEs in each slot.

\subsection{User role assignment}
From the view of one-to-one roommate  matching \cite{irving_1985_algorithm}, an RP and the associated RD can be seen as a pair of roommate, and the UEs who do not get matched as the SUs. The user benefit is defined as the measure for two agents if they can be matched. Assume UE $m$ $\left(\forall m \in \mathcal{N}\right)$ is matched to UE $k$ $\left(\forall k \in \mathcal{N} \ \text{and} \ m \neq k\right)$, where UE $m$ and UE $k$ respectively act as the RD and the RP in the cooperation, then, they will get the same benefit as
\begin{align}
  U_{m, k} = U_{k, m} = \max_{l_{i, j}} \left( \left( E_m^{\mathbbm{S}} + E_k^{\mathbbm{S}} \right) - \left( E_m^\mathbbm{D} + E_k^{\mathbbm{P}} \right)\right), \label{equ.A_untility}
\end{align}
where $\left( E_m^{\mathbbm{S}} + E_k^{\mathbbm{S}} \right)$ denotes the total energy consumption of UEs $m$ and $k$ when they work independently, whereas $\left( E_m^\mathbbm{D} + E_k^{\mathbbm{P}} \right)$ denotes the total energy consumption when they cooperate. Hence, the proposed benefit function \eqref{equ.A_untility} gives the maximum energy that UEs $m$ and $k$ can save through cooperation. Function (15) is a special case of problem \eqref{equ.c_min_one_function} which can be solved by the proposed SCA algorithm.

By solving function (15), any UE $m$ can rank the other UEs in set $\mathcal{N}$ according to the benefits that they can get from matching with other UEs. The preference list for UE $m$ to choose roommates is set to $\mathcal{PL}_m = \left\{ U_{m, k}^{ \left(1\right)},\cdots,U_{m, N - 1}^{\left(N - 1\right)} \right\}$, where $U_{m, k}^{\left(1\right)}$ and $U_{m, N - 1}^{\left(N - 1\right)}$ denote the most preferred and the least preferred UEs for UE $m$, respectively. Then, the role of each UE can be determined by using the Irving algorithm \cite{irving_1985_algorithm}, which is presented in Appendix A. The RP and RD in each successful matching pair are put into sets $\mathcal{N}^{\mathbbm{P}}$ and $\mathcal{N}^{\mathbbm{D}}$, respectively, but the UEs which can not successfully match a roommate are put into the SU set $\mathcal{N}^{\mathbbm{S}}$, rather than simply rejected as in the classical \textit{Irving algorithm}. As a result, the UE set $\mathcal{N}$ is partitioned into three subsets $\mathcal{N}^{\mathbbm{D}}$, $\mathcal{N}^{\mathbbm{P}}$ and $\mathcal{N}^{\mathbbm{S}}$.

If two UEs $i$ and $j$ ($i \neq j$) both have the incentive to leave their current partners and form a new pair with each other, these two UEs form a blocking pair $\left(i, j\right)$. One can see from Appendix B that the matching result of the proposed user role assignment algorithm is stable as there exists no blocking pair in the matching result.

\subsection{User association}
After obtaining the RD and RP sets, we consider that one RP can serve multiple RDs in a slot. This user association problem can be seen as a multi-to-one two-side matching, which can be transformed into the college admission problem (CAP) \cite{gu_2015_jsac} with the aim to achieve a stable multi-to-one matching.
In the user association problem, the two disjoint UE sets, i.e., the RP set $\mathcal{N}^\mathbbm{P}$ and the RD set $\mathcal{N}^\mathbbm{D}$  correspond to the college set and the student set, respectively. The benefit that RD $i$ can get after associating with RP $j$ is shown as the energy saving for RD $i$ with the help of RP $j$, then, one has
\begin{align}
  U_{i, j}^{\mathbbm{D}} = \max_{l_{i, j}} \left(  E_i^\mathbbm{S} - E_i^\mathbbm{D} \right). \label{equ.B_H_utility}
\end{align}
The benefit or the energy saving which RP $j$ can get is
\begin{align}
U_{j, i}^\mathbbm{P} = \max_{l_{i, j}} \left( \left(E_i^\mathbbm{S} + E_j^\mathbbm{S}\right) - \left( E_i^\mathbbm{D} + E_j^\mathbbm{P} \right) \right).
\label{equ.B_L_utility}
\end{align}
Following \eqref{equ.B_H_utility}, the preference list of RD $i$ is defined as $\mathcal{PL}^\mathbbm{D}_i$, which is a ranking of RD $i$ over all the RPs in set $ \mathcal{N}^{\mathbbm{P}}$ in descending order according to $U_{i, j}^{\mathbbm{D}}$ $\left( \forall j \in \mathcal{N}^{\mathbbm{P}} \right)$. Similarly, following \eqref{equ.B_L_utility}, the preference list of RP $j$ is defined as $\mathcal{PL}^\mathbbm{P}_j$, which is a ranking of RP $j$ over all the RDs in set $\mathcal{N}^{\mathbbm{D}}$ in descending order according to $U_{j, i}^{\mathbbm{P}}$ $\left( \forall i \in \mathcal{N}^{\mathbbm{D}} \right)$.

Up to now, we have mapped the user association problem into the CAP. It can then be solved by using the Gale-Shapley (GS) algorithm, which is presented in Appendix C. Although the obtained matching result is stable, some RDs may not match their preferred RPs \cite{gu_2015_jsac, wang_2019_iot}. Hence we propose the rotation swap operation (RSO) which allows RDs to associate with other RPs with better performance through exchanging their currently matched RPs without losing the stability. Before presenting the RSO, some definitions and notations are given below.

\begin{definition} \label{def.cabal}
  Cabal: a cabal $\mathcal{K}= \left\{k_1, ..., k_x, ..., k_K\right\}$ is a subset of $\mathcal{N}^\mathbbm{D}$, such that $ \Omega\left(k_{x - 1}\right) \succ_{k_x} \Omega(k_x)$ according to $\mathcal{PL}^\mathbbm{D}_{k_x}$, $\forall k_x \in \mathcal{K}$ $\left( x - 1 = K \ when \  x = 1 \right)$.
\end{definition}
\begin{definition} \label{def.accomplice}
  Accomplice: the accomplice set $\mathcal{H}(\mathcal{K})$ of cabal $\mathcal{K}$ is a subset of $\mathcal{N}^\mathbbm{D}$, such that $h \in \mathcal{H}(\mathcal{K})$ if 1) $h \notin \mathcal{K}$, for any $ k_x \in \mathcal{K}$ if $\Omega(k_x) \succ_{h} \Omega(h)$ and $h \succ_{\Omega(k_x)} k_x$, or 2) $h \in \mathcal{K}$ and $h = k_l$ ($\forall k_l \in \mathcal{K}$), for any $k_x \in \mathcal{K}$ and $x \neq l$ if $\Omega(k_x) \succ_{k_l} \Omega(k_{l - 1})$ and $k_l \succ_{\Omega(k_x)} k_{x + 1}$.
\end{definition}

\textbf{Definitions 1} and \textbf{2} show that the RDs in set $\mathcal{H}(\mathcal{K})$ may have prevented the RDs in set $\mathcal{K}$ from matching their more preferred RPs. Next, we propose the falsify operation, which enables the RDs in set $\mathcal{H}(\mathcal{K})$ to help the RDs in set $\mathcal{K}$ match more preferred RPs. The detail is given below.

Let $s \succ_{i} n$ denote that RD $i$ prefers RP $s$ over RP $n$ in $\mathcal{PL}^\mathbbm{D}_i$. Let $\Omega\left(i\right)$ denote the partner of RD $i$ obtained by using the GS algorithm. Assume $\Omega_\text{S}(i)$ denotes the partner of RD $i$ obtained by further performing the RSO. If $\Omega_\text{S}(i) \succeq_{i} \Omega(i)$, we say that matching $\Omega_\text{S}$ is “at least as good as” matching $\Omega$, which is denoted by $\Omega_\text{S} \geq \Omega$. We rewrite the preference list of RD $i$ as $\mathcal{PL}^{\mathbbm{D}}_i = \left( \mathcal{PL}^\mathbbm{D}_L(i), \Omega(i), \mathcal{PL}^\mathbbm{D}_R(i) \right)$, where $\mathcal{PL}^\mathbbm{D}_L(i)$ and $\mathcal{PL}^\mathbbm{D}_R(i)$ denote the RPs that are ranked higher and lower than $\Omega(i)$, respectively. Then, the falsify operation is to move RP $\theta$ $(\forall \theta \in \mathcal{PL}^\mathbbm{D}_L(i))$ from $\mathcal{PL}^\mathbbm{D}_L(i)$ to $\mathcal{PL}^\mathbbm{D}_R(i)$. Let $\pi_r \left( \mathcal{PL}^{\mathbbm{D}}_i \right)$ denote the random permutation of $\mathcal{PL}^{\mathbbm{D}}_i$. Then, we can have the following Lemma.
\begin{lemma}
Let $\mathcal{J} \subseteq \mathcal{N}^\mathbbm{D}$. If all the RDs in set $\mathcal{J}$ ($\forall i \in \mathcal{J}$) submit their falsified lists in the form $\left( \pi_r\left(\mathcal{PL}^\mathbbm{D}_L(i) - \theta\right), \; \Omega\left( i \right), \; \pi_r\left(\mathcal{PL}^\mathbbm{D}_R(i) + \theta\right) \right)$, then $\Omega_S \geq \Omega$.
\end{lemma}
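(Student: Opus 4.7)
The plan is to prove the lemma by showing two things in sequence: that the original matching $\Omega$ remains stable under the falsified preference profile, and that therefore the RD-optimal GS output $\Omega_S$ on the falsified profile must weakly dominate $\Omega$ for every RD, first in the falsified ordering and then, after translation, in the true ordering.

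First I would argue that $\Omega$ is stable under the falsified profile. The RPs report truthfully, and any RD $i \notin \mathcal{J}$ has unchanged preferences, so any new blocking pair must involve some $i \in \mathcal{J}$ together with an RP $j$. For $(i,j)$ to block under the falsified list we need $j \succ_i' \Omega(i)$, which by construction means $j \in \pi_r(\mathcal{PL}^\mathbbm{D}_L(i) - \theta)$ and therefore $j \in \mathcal{PL}^\mathbbm{D}_L(i)$, i.e.\ $j \succ_i \Omega(i)$ in the true list. Combined with the RP-side preference (which is untouched), $(i,j)$ would then block $\Omega$ under the true preferences, contradicting the stability of the GS output. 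Hence no blocking pair exists under the falsified profile, so $\Omega$ is a stable matching of the manipulated instance.

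Next I would invoke the standard proposer-optimality property of the GS algorithm applied to the CAP: the matching $\Omega_S$ it returns on the falsified profile is the RD-optimal stable matching, so it weakly dominates every other stable matching from each RD's viewpoint. Applying this to $\Omega$, which we just showed is stable on the falsified profile, yields $\Omega_S(i) \succeq_i' \Omega(i)$ for every $i \in \mathcal{N}^\mathbbm{D}$. Finally, I would translate this back to the true ranking: for $i \notin \mathcal{J}$ the two orderings coincide; for $i \in \mathcal{J}$, anything strictly above $\Omega(i)$ in the falsified list lies in $\pi_r(\mathcal{PL}^\mathbbm{D}_L(i) - \theta) \subseteq \mathcal{PL}^\mathbbm{D}_L(i)$ and is therefore also strictly above $\Omega(i)$ in the true list, while $\Omega(i)$ itself is preserved. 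Thus $\Omega_S(i) \succeq_i \Omega(i)$ for all $i$, i.e.\ $\Omega_S \geq \Omega$.

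The main obstacle I anticipate is the stability step, because the random permutations within $\mathcal{PL}^\mathbbm{D}_L(i)$ and $\mathcal{PL}^\mathbbm{D}_R(i)$ look dangerous at first glance; the key observation that rescues the argument is that the partition point $\Omega(i)$ is fixed, so the set of RPs that an $i \in \mathcal{J}$ would newly wish to block with is contained in (not equal to) the set already preferred to $\Omega(i)$ in the truthful list. Once this containment is made explicit, both the stability of $\Omega$ under the manipulation and the back-translation from $\succeq_i'$ to $\succeq_i$ fall out cleanly, and the remainder is just a citation of GS proposer-optimality.
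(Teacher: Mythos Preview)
Your argument is correct and is in fact cleaner than the paper's own proof, but it proceeds along a genuinely different line.

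The paper's Appendix~D argues \emph{operationally}: it observes that every RP in $\mathcal{PL}^\mathbbm{D}_L(i)$ has already rejected RD~$i$ during the GS run, and that RD~$i$ never reaches $\mathcal{PL}^\mathbbm{D}_R(i)$; it then asserts in two short cases that permuting within each side, and shifting some $\theta$ from left to right, leaves RD~$i$ with the same partner as before, so $\Omega_S$ is ``at least the same as'' $\Omega$. In effect the paper is claiming the stronger conclusion $\Omega_S=\Omega$ by tracing the deferred-acceptance execution, though it does not spell out how simultaneous falsifications by several RDs in $\mathcal{J}$ interact.

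Your route is \emph{structural}: you never re-run GS. Instead you (i) show that $\Omega$ remains stable under the falsified profile via the containment $\pi_r(\mathcal{PL}^\mathbbm{D}_L(i)-\theta)\subseteq\mathcal{PL}^\mathbbm{D}_L(i)$, (ii) invoke proposer-optimality of GS to get $\Omega_S(i)\succeq_i'\Omega(i)$ for all $i$, and (iii) translate back using the same containment. This handles the multi-RD case uniformly and avoids any algorithmic bookkeeping; the price is that you only obtain the weak domination $\Omega_S\ge\Omega$ actually stated in the lemma, not the equality the paper implicitly argues. Either approach suffices for the lemma as written; yours is the more robust of the two.
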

\begin{proof}
  Please refer to Appendix D.
\end{proof}

\textbf{Lemma 1} shows that if the RDs which have matched RPs change the location of their rejected RPs in the preference list, the matching result will not change. The detail of the RSO in presented in \textbf{Algorithm \ref{alg.swap_operation}}, wherein lines 4-10, the RDs in set $\mathcal{H}$ help the RDs in set $\mathcal{K}$ match their more preferred RPs through the falsify operation. In each round of RSO, most of the existing algorithms, e.g. \cite{gu_2015_jsac, wang_2019_iot}, allow only two users to swap resources with each other, while \textbf{Algorithm 1} allows more than two users to swap resources at the same time, thus greatly improving the execution efficiency of the algorithm.

\begin{algorithm}[!h]
  \begin{algorithmic}[1]
    \caption{Rotation Swap Operation (RSO)}
    \label{alg.swap_operation}
    \STATE Let $\Omega = \Omega_0$, where $\Omega_0$ is the matching result obtained by performing the Irving algorithm and the GS algorithm sequentially.
    \STATE Find the largest cabal $\mathcal{K}$ from $\Omega$ by using \textbf{Definition 1}.
    \STATE Find the accomplice $\mathcal{H}(\mathcal{K})$ of cabal $\mathcal{K}$ by using \textbf{Definition \ref{def.accomplice}}.
    \FORALL{RD $i \in \mathcal{K}$}
        \IF{$i \in \mathcal{H}\left(\mathcal{K}\right) - \mathcal{K}$}
            \STATE RD $i$ falsifies its preference list $\mathcal{PL}^\mathbbm{D}_i$ as $\left( \pi_r(\mathcal{PL}^\mathbbm{D}_L(i) - \theta), \; \Omega(i), \; \pi_r(\mathcal{PL}^\mathbbm{D}_R(i) + \theta) \right)$, where $\theta = \left\{ c \; | \; \Omega(m) \in \Omega(\mathcal{K}), \; m \succ_{c} m + 1 \right\}$.
        \ELSE
            \STATE RD $i$ ($i = i_l, \forall i_l \in \mathcal{K}$) falsifies its preference list $\mathcal{PL}^\mathbbm{D}_i$ as $\left( \pi_r(\mathcal{PL}^\mathbbm{D}_L(i) - \theta), \; \Omega(i - 1), \; \pi_r(\mathcal{PL}^\mathbbm{D}_R(i) + \theta)\right)$, where $\theta = \left\{ c \; | \; \Omega(m) \in \Omega(\mathcal{K}), \; w \succ_{l} \Omega(m - 1)\right.$, $\left. l \succ_{w} m + 1 \right\}$.
        \ENDIF
    \ENDFOR
    \RETURN The modified preference lists.
  \end{algorithmic}
\end{algorithm}

\subsection{Resource allocation} \label{sec.optimal_offloading}
Now, the UEs in the system have been divided into multiple \textit{cooperation groups}, each consisting of one RP and multiple RDs. Therefore, problem \eqref{equ.min_function} can be rewritten as
\begin{alignat}{3}
  &\min_{l_{i, j}, \; p_{i, j}} \; && E^\mathbbm{P}_j + \sum_{i \in \mathcal{M}_j} E_i^\mathbbm{D} \label{equ.c_min_one_function} \\
  &\mbox{s.t.} && \eqref{equ.sr_offloading_size_cons}, \; \eqref{equ.transmit_power}. \notag
\end{alignat}

We note that problem \eqref{equ.c_min_one_function} is non-convex due to the non-convexity of constraint \eqref{equ.transmit_power}.
To address this issue, we define $N_0 = \sigma^2 / g_{i, j}$ and $\alpha = 2^{1/{\tau w}}$,
and convert constraint \eqref{equ.transmit_power} into the following form
\begin{equation}
  p_{i, j} = N_0 \left( \alpha^{\sum_{i \in \mathcal{M}_j} l_{i, j}} - \alpha^{ \sum_{n \in \mathcal{M}_j\backslash\left\{i\right\}} l_{n, j} } \right) \leq P_i^\text{Max}.
\end{equation}
We obtain the partial derivatives of the $p_{i, j}$ with respect to $l_{i, j}$ and $l_{n, j}$ as
\begin{equation}
    \triangledown_{l_{i, j}} p_{i, j} = N_0 \ln \alpha \left( \alpha^{\sum_{i \in \mathcal{M}_j} l_{i, j}} \right).
    \label{equ.l_i_derivate}
\end{equation}
and
\begin{equation}
    \triangledown_{l_{n, j}} p_{i, j} = N_0 \ln \alpha \left( \alpha^{\sum_{i \in \mathcal{M}_j} l_{i, j}} - \alpha^{\sum_{n \in \mathcal{M}_j\backslash\left\{i\right\}} l_{n, j}} \right).
    \label{equ.l_n_derivate}
\end{equation}
respectively. Then, we get the following Lemma.
\begin{lemma}
$\triangledown_{l_{i, j}} p_{i, j}$ and $\triangledown_{l_{n, j}} p_{i, j}$ are Lipschitz continuous on $l_{i, j}$ and $l_{n, j}$ with the constants $L_C$ and $L_F$, respectively, where $L_C$ and $L_F$ are the Lipschitz constants.
\end{lemma}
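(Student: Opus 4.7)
The plan is to prove Lipschitz continuity by bounding the second-order partial derivatives of $p_{i,j}$ on the feasible domain, then invoking the mean value theorem. The key structural observation is that $\nabla_{l_{i,j}} p_{i,j}$ and $\nabla_{l_{n,j}} p_{i,j}$ are composed of exponentials $\alpha^{S}$ where $S$ is a sum of offloading variables; since constraint \eqref{equ.sr_offloading_size_cons} forces $l_{k,j} \in [0,L_k]$ for every $k \in \mathcal{M}_j$, the exponents are confined to a compact interval. Once one has a uniform bound on the derivative of each gradient expression over this compact set, Lipschitz continuity follows immediately, with constants that are simply those bounds.

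First, I would fix all variables other than $l_{i,j}$ and differentiate \eqref{equ.l_i_derivate} once more with respect to $l_{i,j}$ to obtain
\begin{equation}
\frac{\partial}{\partial l_{i,j}}\nabla_{l_{i,j}} p_{i,j} = N_0 (\ln\alpha)^2\,\alpha^{\sum_{k \in \mathcal{M}_j} l_{k,j}}.
\end{equation}
Since $0 \le \sum_{k \in \mathcal{M}_j} l_{k,j} \le \sum_{k \in \mathcal{M}_j} L_k$ and $\alpha = 2^{1/(\tau w)} > 1$, this quantity is non-negative and uniformly bounded above by
\begin{equation}
L_C \triangleq N_0 (\ln\alpha)^2\,\alpha^{\sum_{k \in \mathcal{M}_j} L_k}.
\end{equation}
Applying the mean value theorem on the one-dimensional segment between any two feasible values of $l_{i,j}$ yields $|\nabla_{l_{i,j}} p_{i,j}(x)-\nabla_{l_{i,j}} p_{i,j}(y)| \le L_C |x-y|$, which is precisely the desired Lipschitz property.

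The second half is analogous but slightly more delicate because \eqref{equ.l_n_derivate} is a difference of two exponentials. Differentiating with respect to $l_{n,j}$ (with $n\neq i$) gives
\begin{equation}
\frac{\partial}{\partial l_{n,j}}\nabla_{l_{n,j}} p_{i,j} = N_0 (\ln\alpha)^2\Bigl(\alpha^{\sum_{k\in\mathcal{M}_j} l_{k,j}} - \alpha^{\sum_{k\in\mathcal{M}_j\setminus\{i\}} l_{k,j}}\Bigr).
\end{equation}
Because the second exponent is smaller than the first by exactly $l_{i,j}\ge 0$, the bracket is non-negative and bounded above by $\alpha^{\sum_{k\in\mathcal{M}_j} L_k}$, so I can take
\begin{equation}
L_F \triangleq N_0 (\ln\alpha)^2\,\alpha^{\sum_{k\in\mathcal{M}_j} L_k},
\end{equation}
and again apply the mean value theorem to finish.

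The main obstacle I anticipate is purely notational rather than conceptual: keeping track of which index is being perturbed versus which indices are held fixed in each expression, and making sure the bound on the exponent uses the correct subset of $\mathcal{M}_j$. Once the compact-domain structure is used to replace the running sum with its worst-case value $\sum_{k\in\mathcal{M}_j}L_k$, the Lipschitz constants emerge directly, and no further optimization or convexity machinery is needed.
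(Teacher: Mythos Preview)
Your argument is correct and in fact cleaner than the paper's own proof. The paper does not compute the second derivative at all; instead it observes that $G(x)\triangleq\nabla_{l_{i,j}}p_{i,j}$ is monotone increasing on $[0,L_i]$, bounds $|G(x_1)-G(x_2)|$ by the total oscillation $G(L_i)-G(0)$, and then simply \emph{defines} $L_C=\bigl(G(L_i)-G(0)\bigr)/L_i$ so that $|G(x_1)-G(x_2)|\le L_C\,L_i$. Strictly speaking that last step only bounds the difference against the interval length $L_i$, not against $|x_1-x_2|$, so the paper's argument is somewhat informal. Your route---bounding the second partial derivative uniformly on the compact box $\prod_{k\in\mathcal{M}_j}[0,L_k]$ and invoking the mean value theorem---is the textbook approach: it delivers the genuine Lipschitz inequality $|G(x_1)-G(x_2)|\le L_C|x_1-x_2|$ directly, and it produces an explicit constant $N_0(\ln\alpha)^2\alpha^{\sum_{k}L_k}$ in terms of the system parameters. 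Both approaches ultimately rest on the same structural fact, namely compactness of the feasible set forced by constraint~\eqref{equ.sr_offloading_size_cons}, so the difference is chiefly one of presentation and rigor rather than of underlying idea.
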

\begin{proof}
  Please refer to Appendix E.
\end{proof}

If the first derivative of a non-convex function is Lipschitz continuous on the variables with the corresponding Lipschitz constants and each variable is nonempty, closed, and convex, then, the non-convex function can be converted to a convex one by SCA \cite{wu_2018_GC}. \textbf{Lemma 2} indicates that $ p_{i, j} $ meets this rule and thus can be converted to a convex one by using SCA.

Let $\mathcal{D}_j \triangleq \left( l_{1, j}, ..., l_{\left| \mathcal{M}_j \right|, j} \right)^T$ denote the offloading strategy profile of the RDs in set $\mathcal{M}_j$. Let $\delta = 1, 2, ... $ denote the iterative numbers. Then, $l_{i, j}[\delta]$ represents the amount of task data offloaded from RD $i$ to RP $j$ in the $\delta^{\textbf{th}}$ iteration. By using eqs. \eqref{equ.l_i_derivate} and \eqref{equ.l_n_derivate}, we derive the gradient of $ p_{i, j} $ on $\mathcal{D}_j [\delta]$ as
\begin{equation}
  \triangledown_{\mathcal{D}_j [\delta]} p_{i, j} = \left( \frac{\partial p_{i, j}}{\partial l_{1, j}[\delta]}, \frac{\partial p_{i, j}}{\partial l_{2, j}[\delta]}, ..., \frac{\partial p_{i, j}}{\partial l_{\left| \mathcal{M}_j \right|, j}[\delta]} \right)^T.
\end{equation}
According to \cite{Facchinei_2014_icassp}, a strongly convex function to approximate $p_{i, j}$ can then be constructed as
\begin{align}
  \widetilde{p}_{i, j} \triangleq & \; p_{i, j} [\delta] + \left( \mathcal{D}_j - \mathcal{D}_j[\delta] \right)^T \triangledown_{\mathcal{D}_j [\delta]} p_{i, j} + \frac{\lambda}{2} \left|| \mathcal{D}_j - \mathcal{D}_j[\delta] \right||_2 + \notag  \\
  & \frac{1}{2} \left( \mathcal{D}_j - \mathcal{D}_j[\delta] \right)^T \left( \mathcal{D}_j - \mathcal{D}_j[\delta] \right) \triangledown_{\mathcal{D}_j [\delta]}^2 p_{i, j}.
\end{align}
where $\lambda \geq 0$. Now, we can approximate constraint \eqref{equ.transmit_power} as
\begin{equation}
  \widetilde{p}_{i, j} \leq P_i^\text{Max}, \;\; \forall i \in \mathcal{M}_j. \label{equ.transfromed}
\end{equation}

By replacing constraint \eqref{equ.transmit_power} with constraint \eqref{equ.transfromed}, problem \eqref{equ.c_min_one_function} is converted to a convex one and can be solved by using convex optimization tools, e.g., CVX \cite{wu_2018_GC}.
The detail of the SCA algorithm is presented in Appendix F.

\subsection{Overall algorithm}

\begin{algorithm}[!h]
  \begin{algorithmic}[1]
    \caption{Overall algorithm to solve problem (14)}
    \label{alg.overall_alg}
    \STATE Perform the \textit{user role control} to limit the UEs with available energy less than the threshold, i.e.,  $E_m^{\text{Ax}} \leq E_m^{\text{min}}$ to become RPs.
    \STATE Perform the \textit{Irving algorithm} to partition the UE set $\mathcal{N}$ into three subsets $\mathcal{N}^{\mathbbm{P}}$, $\mathcal{N}^{\mathbbm{D}}$, $\mathcal{N}^{\mathbbm{S}}$.
    \STATE  Perform the \textit{GS algorithm} to establish the association between the UEs in set $\mathcal{N}^{\mathbbm{P}}$ and the UEs in set $\mathcal{N}^{\mathbbm{D}}$.
    \REPEAT
        \STATE Perform \textit{Algorithm 1}, i.e., the RSO to find \textit{cabal} and falsify the \textit{preference lists} of the UEs in the \textit{cabal}.
        \STATE Perform the \textit{GS algorithm} by using the modified \textit{preference lists} to update the UE association, that is
        \IF{there exist no $j \in \mathcal{N}^\mathbbm{P}$ such that $\Omega(j) \succ_{j} \Omega_\text{S}(j)$}
            \STATE update the UE association.
        \ENDIF
    \UNTIL{the matching result has no cabal.}
    \STATE  Perform the \textit{SCA} algorithm for each \textit{cooperation group} to obtain the resource allocation.
  \end{algorithmic}
\end{algorithm}

We present the overall algorithm for implementing the MUCC in \textbf{Algorithm \ref{alg.overall_alg}}. The algorithm first performs the user role control operation (line 1), which limits the UEs with the remaining energy less than the threshold to become RPs, thus satisfying constraint \eqref{fig.total_cons}. The complexity of the SCA algorithm (line 11) is related to the number of iterations $\delta$ as well as the complexity of the solver for the convex problems \cite{wu_2018_GC}. Since the adopted CVX toolbox is based on the \textit{standard interior point method}, whose complexity is \textbf{O}$\left( a_j^3 \right)$ \cite{6409501}, the complexity of the SCA algorithm is thus \textbf{O}$\left( \delta a_j^3 \right)$. As analyzed in Appendix G, the computational complexity of \textbf{Algorithm 2} is \textbf{O}$\left(N^2 + \vert \mathcal{N}^{\mathbbm{P}} \vert \cdot \vert \mathcal{N}^{\mathbbm{D}} \vert +  \delta a_j^3\right)$.

The proposed MUCC scheme is enabled by D2D communications. The effective range of the D2D communications is limited to one-hop, and a cellular base station (BS) is required to establish and manage the D2D links. Therefore, \textbf{Algorithm \ref{alg.overall_alg}} can be performed at a BS in a \textit{centralized} manner, as the BS can obtain the channel state information (CSI), and the information of available energy and computing resource of all UEs through dedicated feedback channels. However, inaccurate CSI estimation will prevent the system from achieving the optimal performance. Please refer to Appendix G for a detailed analysis.

\section{Simulation Results}
We consider a $100 \times 100 \; \text{m}^2$ rectangular area, where multiple UEs are randomly distributed. This is a general WLAN case that enables distributed UEs to perform D2D communications and further perform cooperative computing. Other parameters used in the simulation are given below. The number of CPU cycles required to execute one task bit is $C_m = 500$, and the effective capacitance coefficient is $\gamma_m = 10^{-28}$ \cite{lin_2019_coml}. The noise power is $\sigma^2 = 10^{-9}$. The maximum transmit power of the UE is set to 0.1 W. The duration of a time slot is $\tau = 0.2$ s. The bandwidth is $w = 1$ MHz. The channel power gain is set to $g_{i, j} = \zeta_{i, j} / d^{\beta}$ \cite{gu_2015_jsac}, where $d$ is the distance between two UEs (in meters), $\beta = 3$ is the power-scaling path-loss factor, and $\zeta_{i, j}$ follows an exponential distribution with a unit mean, which captures the fading and shadowing effects. The maximum connection constraint of the UE is $a_m = 2$.

To verify the effectiveness of the proposed MUCC scheme, we compare our proposed algorithm with the exhaustive search (ES) algorithm. Although the ES algorithm can find the optimal solution of problem \eqref{equ.min_function}, the computational complexity is too high to be used in practice. The number of UEs in the system is set to less than 12. The simulation results in shown in Fig. \ref{fig.energy_consumption}. One can see from Fig. \ref{fig.energy_consumption} that the performance loss of our proposed algorithm is small compared to the optimal solution obtained by using the ES algorithm, but our algorithm far outperforms the local computing method. It indicates that the proposed algorithm achieves a better compromise between computational complexity and optimal performance.

\begin{figure}[!t]
    \centering
    \includegraphics[width=0.75\linewidth]{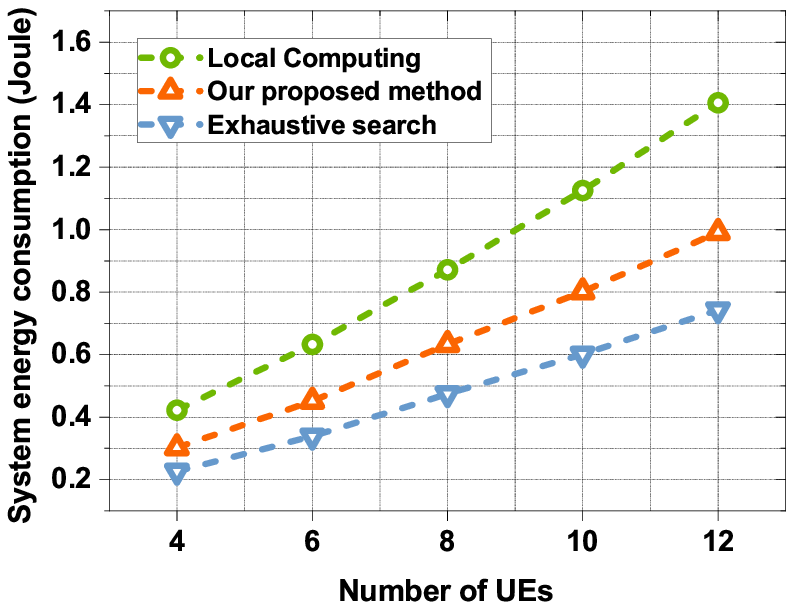}
    \caption{The overall energy versus the number of UEs.}
    \label{fig.energy_consumption}
\end{figure}

\begin{figure}[!t]
    \centering
    \includegraphics[width=0.75\linewidth]{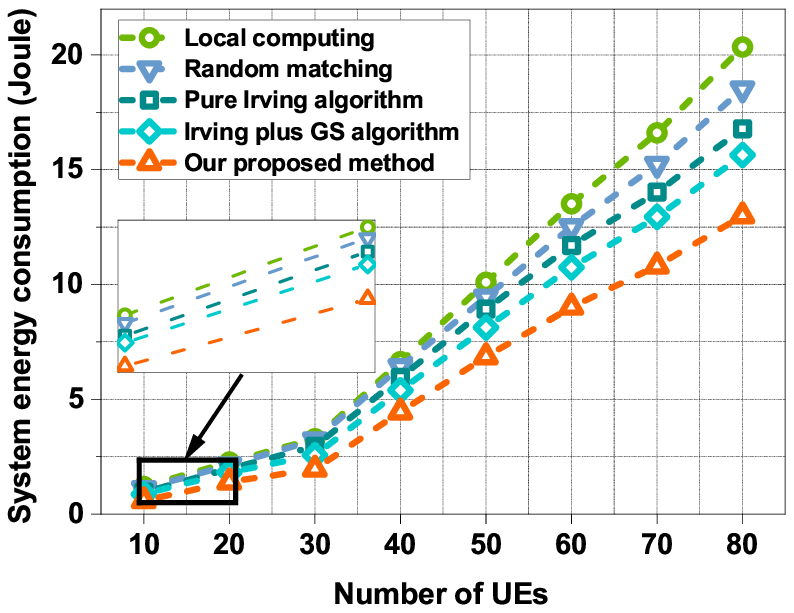}
    \caption{The overall energy versus the number of UEs.}
    \label{fig.energy_all}
\end{figure}

Next, we compare our proposed algorithm with the one-to-one random-matching algorithm, the pure Irving algorithm, and the Irving plus GS algorithm in a network with a larger number of UEs. The amount of task data for each UE is randomly distributed in $[0, 1]$ Mbits. Fig. \ref{fig.energy_all} shows the overall energy versus the number of the UEs. One can see that our proposed algorithm achieves better energy efficiency over the other algorithms in all cases. The reason behind is that the proposed algorithm realizes the global user matching and resource allocation according to the available computing resources to the RPs, the required computing resources for the RDs, and the channel conditions between them, thus greatly reducing the overall energy consumption of the system.

\begin{figure}[!t]
  \subfigure[Before performing the RSO.] {
    \includegraphics[width=0.466\linewidth]{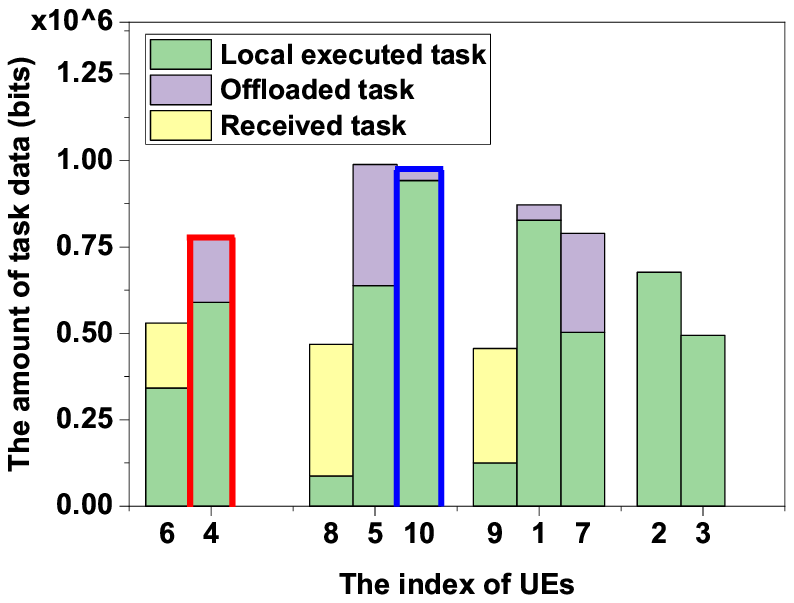}
    \label{fig.deployment_energy}
  }
  \subfigure[After performing the RSO.] {
    \includegraphics[width=0.466\linewidth]{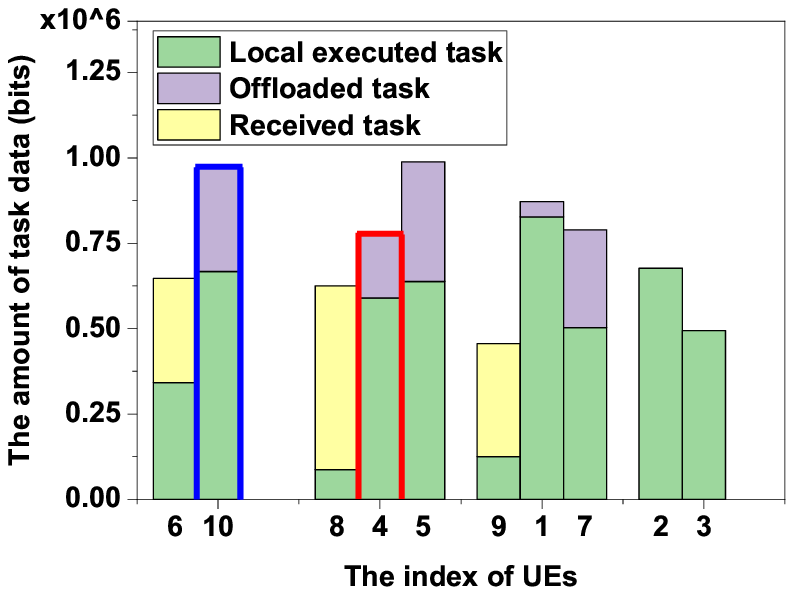}
    \label{fig.after_swap_deployment_energy}
  }
  \caption{Task offloading results.}
  \label{fig.after_all_tasks}
\end{figure}

Finally, we show how the proposed RSO influences the system performance when there are 10 UEs. Figs. \ref{fig.after_all_tasks}\subref{fig.deployment_energy} and \ref{fig.after_all_tasks}\subref{fig.after_swap_deployment_energy} show the amount of offloaded data before and after performing the RSO, respectively.
One can see that after performing the RSO, the amount of tasks offloaded by both UEs 4 and 10 are increased, thus reducing their energy consumption and further reducing the total energy consumption of the system. The reason behind is that the RDs in set $\mathcal{H}$ help the RDs in set $\mathcal{K}$ match their more preferred RPs through the falsify operation (lines 4-10 of \textbf{Algorithm 1}). Thus, the RSO can further decrease the overall energy consumption of the system through benefiting the RDs.

\section{Conclusion and Future Works}
This paper presents a MUCC scheme based on matching theory that answers the questions about who and how to cooperate with. In the future work, we will extend the single-hop MUCC scheme to the multi-hop MUCC scheme, where the task data of a RD could be forwarded by multiple relay UEs to reach the RP. In such a case, the energy consumption of all the relay UEs along a route should be considered when balancing the energy consumption for task execution and that for data transmission. In addition, meeting the delay requirements for user computing tasks is another huge challenge.

\begin{appendices}
\section{The User Role Assignment Algorithm}
\begin{algorithm}[H]
  \begin{algorithmic}[1]
    \caption{User Role Assignment}
    \label{alg.one2one}
    \REQUIRE let $\mathcal{UL}=\mathcal{N}$, where $\mathcal{UL}$ is the set of the unmatched UEs in a multi-user cooperative computing system.
    \WHILE{$\mathcal{UL}$ is not empty}
      \STATE Choose UE $m$ $(\forall m\in\mathcal{UL})$ from set $\mathcal{UL}$;
      \IF{UE $m$ is rejected by all the UEs in $\mathcal{PL}_m$}
        \STATE Remove UE $m$ from set $\mathcal{UL}$;
        \STATE Go back to \textit{Step 1};
      \ENDIF
      \STATE UE $m$ proposes to match UE $k$ $(\forall k\in\mathcal{PL}_m)$, which is the most favorite partner of UE $m$ that does not reject UE $m$;
      \IF{UE $k$ holds no proposal \OR prefers UE $m$ to previous proposal UE $s$ $(\forall s\in\mathcal{PL}_k)$ }
        \STATE UE $k$ matches UE $m$ and rejects UE $s$;
        \STATE Remove UE $m$ from set $\mathcal{UL}$;
        \STATE Add UE $s$ in set $\mathcal{UL}$;
      \ELSE
        \STATE UE $k$ rejects UE $m$;
      \ENDIF
    \ENDWHILE
    \STATE Each successfully matched UE retains the matching UE and rejects all the other UEs in the preference list;
    \FORALL{successful matching pairs in the system}
        \STATE Put the UEs acting as the resource providers (RPs) in the matching pairs into set $\mathcal{N}^{\mathbbm{D}}$;
        \STATE Put the UEs acting as the resource demanders (RDs) in the matching pairs into set $\mathcal{N}^{\mathbbm{P}}$;
    \ENDFOR
    \STATE Put the non-matched UEs into set $\mathcal{N}^{\mathbbm{S}}$;
    \RETURN UE sets $\mathcal{N}^{\mathbbm{D}}$, $\mathcal{N}^{\mathbbm{P}}$, and $\mathcal{N}^{\mathbbm{S}}$.
  \end{algorithmic}
\end{algorithm}

\section{The Proof of Stability of URA Algorithm}
We prove this Lemma by contradiction. Assuming that the matching result is not a stable matching, then at least one blocking pair $(m, k)$, $\forall m,k \in \mathcal{N}$ and $m \neq k$ exists in the matching result. We analyze the following two cases.
\begin{enumerate}
    \item In the first case, we consider that UE $k$ is unmatched and never receives a request from UE $m$. It indicates that UE $m$ prefers its current partner to UE $k$. Therefore, pair ($m, k$) cannot constitute the \textit{blocking condition} (given in the paragraph above Lemma 1 of the paper).
    \item In the second case, we consider that UE $k$ is matched successfully and the unilaterally \textit{blocking condition} that UE $k$ prefers UE $m$ to its current partner is satisfied. In such a case, UE $m$ is either unmatched or prefers UE $k$ to the current partner. Then, UE $m$ should request computing resource from UE $k$ or pair ($m, k$) should be removed. In either case, all the UEs located after UE $m$ in the preference list of UE $k$ (including the current partner of UE $k$) will be rejected by UE $k$, which causes a contradiction.
\end{enumerate}

\section{The Gale-Shapley Algorithm}
  \begin{algorithm}[H]
    \begin{algorithmic}[1]
      \caption{The Gale-Shapley Algorithm}
      \label{alg.many2one}
      \REQUIRE let $\mathcal{UL} = \mathcal{N}^\mathbbm{D}$, where $\mathcal{UL}$ is the set of the unmatched UEs.
      \WHILE{$\mathcal{UL}$ is not empty}
        \STATE Choose RD $i$ ($\forall i \in \mathcal{UL}$) from set $\mathcal{UL}$;
        \IF{RD $i$ is rejected by all RPs in $\mathcal{PL}_i^\mathbbm{D}$}
          \STATE Remove RD $i$ from set $\mathcal{UL}$;
          \STATE Go back to \textit{Step 1};
        \ENDIF
        \STATE RD $i$ proposes to match RP $j$ ($\forall j \in \mathcal{PL}_i^\mathbbm{D}$), which is the most favorite RP of RD $i$ that does not reject RD $i$;
        \IF{RP $j$ holds the number of proposals less than the maximum limit $a_j$}
          \STATE RP $j$ matches RD $i$;
          \STATE Remove RD $i$ from set $\mathcal{UL}$;
        \ELSIF{RP $j$ holds more than $a_j$ proposals \AND prefers RD $i$ to one of its current proposal RDs}
          \STATE RP $j$ selects the least preferred RD $m$ ($\forall m \in \mathcal{PL}^{\mathbbm{P}}_j$) from the current proposal SDs and rejects it;
          \STATE Add RD $m$ to set $\mathcal{UL}$;
          \STATE RP $j$ matches RD $i$;
          \STATE Remove RD $i$ from set $\mathcal{UL}$;
        \ELSE
          \STATE RP $j$ rejects RD $i$.
        \ENDIF
      \ENDWHILE
      \RETURN UE association.
    \end{algorithmic}
  \end{algorithm}

\section{The Proof of Lemma 1}
Considering that all the RPs in $\mathcal{PL}^\mathbbm{D}_L(i)$ have rejected the request of RD $i$, we prove this Lemma by analyzing the following two cases.
\begin{enumerate}
    \item In the first case, RD $i$ does not propose to match the RPs in $\mathcal{PL}^\mathbbm{D}_R(i)$. Therefore, the random permutation of $\mathcal{PL}^\mathbbm{D}_L(i)$ and $\mathcal{PL}^\mathbbm{D}_R(i)$ has no effect on the result.
    \item In the second case, we assume that the \textit{shift operation} is performed for the RPs in $\mathcal{PL}^{\mathbbm{D}}_L (i)$. Hence, part of them is moved from $\mathcal{PL}^\mathbbm{D}_L (i)$ to $\mathcal{PL}^\mathbbm{D}_R (i)$, and, the newly obtained $\mathcal{PL}^\mathbbm{D}_L(i)$ and $\mathcal{PL}^\mathbbm{D}_R(i)$ are randomly permuted. As RD $i$ still holds the same partner as before, the current matching $\Omega_{\text{S}}$ is at least the same as $\Omega$.
\end{enumerate}

\section{The Proof of Lemma 2}
Let $x=l_{i, j}$ and $G\left( l_{i, j} \right) = \triangledown_{l_{i, j}} F_{i, j}$. Due to the monotonic increasing of $G\left( l_{i, j} \right)$, one can get
\begin{equation}
\lim_{x \to 0} G\left( x \right) \leq G\left( x \right) \leq \lim_{x \to L_{i}} G\left( x \right).
\end{equation}
For any $x_1, x_2 \in \left[ 0, L_{i} \right]$, one can further derive that
\begin{equation}
\left| G\left( x_1 \right) - G\left( x_2 \right) \right| \leq G\left( L_{i} \right) - G\left( 0 \right).
\end{equation}
Since $\left| x_1 - x_2 \right| \leq L_{i}$, it is easy to derive that
\begin{equation}
\left| G\left( x_1 \right) - G\left( x_2 \right) \right| \leq L_C L_{i},
\end{equation}
where
\begin{equation}
L_C = \frac{ G\left( L_{i} \right) - G\left( 0 \right)}{L_{i}},
\end{equation}
is the Lipschitz constant of $G\left( x \right)$.

Now, we have proved that $\triangledown_{l_{i, j}} F_{i, j}$ is Lipschitz continuous on $l_{i, j}$ with the Lipschitz constant $L_C$. The Lipschitz continuity of $\triangledown_{l_{n, j}} F_{i, j}$ can be proved in a similar way which is omitted here.

\section{The SCA Algorithm}

\begin{algorithm}[!h]
\begin{algorithmic}[1]
\caption{The SCA based algorithm to solve problem (18)}
\STATE Initialize $\delta=0$.
\STATE Find a feasible $D_j [\delta]$.
\REPEAT
\STATE For a given $D_j [\delta]$, find the optimal $D_j^* [\delta]$ by solving problem (18) with constraint (24);
\STATE $D_j [\delta + 1] = D_j^* [\delta] + \Delta[\delta]$;
\STATE $\delta = \delta + 1$;
\UNTIL{the objective function of problem (18) converges, or the maximum number of iterations is reached.}
\RETURN $D_j [\delta]$ as well as the value of the objective function of problem (18).
\end{algorithmic}
\end{algorithm}

\section{The Computational Complexity of Overall Algorithm}
The overall algorithm involves the following four algorithms.
\begin{enumerate}
    \item The User Role Assignment (URA) algorithm is based on the Irving algorithm with the computational complexity of \textbf{O}($N^2$) [7], where $N$ denotes the number of the involved user devices.

    \item The Gale-Shapley (GS) algorithm is sure to converge after at most $\vert \mathcal{N}^{\mathbbm{P}} \vert \cdot \vert\mathcal{N}^{\mathbbm{D}} \vert$ iterations [8], where $\mathcal{N}^{\mathbbm{P}}$ and $\mathcal{N}^{\mathbbm{D}}$ denote the RP set and the RD set, respectively.

    \item The Rotation Swap Operation (RSO) uses a depth-first traversal to find cabals. Since each user is accessed at most once during a lookup, the complexity of the RSO is \textbf{O}($N$) [9].

    \item The Successive Convex Approximation (SCA) algorithm is used to find the optimal task assignment strategy within a user group. It is noted that a user group consists of one RP $j$ and $a_j$ RDs, and $\delta$ denotes the iterator number.
    \begin{enumerate}
        \item As stated in ref. \cite{wu_2018_GC} and \cite{6409501}, the method of SCA algorithms is to iteratively solve a series of convex problems to eventually approximate the optimal solution of the original problem. Therefore, the complexity of the SCA algorithm is related firstly to the number of iterations $\delta$ and secondly to the complexity of solving the convex problems.

        \item In this paper, we use the CVX toolbox as in \cite{wu_2018_GC} and \cite{6409501} to solve the convex problems in each iteration. From ref. \cite{6409501} (pp. 697), we know that the CVX toolbox is implemented by using the \textit{standard interior point method}, whose complexity is \textit{cubic} with respect to the dimensionality of the input space
        In the worst case, the input size can be considered as the total number of users $N$. Thus the complexity of the CVX toolbox to solve a convex problem in each iteration is \textbf{O}$\left( N^3 \right)$.

        \item Based on the above analysis, we can give the complexity of the whole SCA algorithm as \textbf{O}$\left( \delta N^3 \right)$.
    \end{enumerate}
\end{enumerate}

According the above analysis, the computational complexity of the overall algorithm is given by \textbf{O}$\left(N^2 + \vert \mathcal{N}^{\mathbbm{P}} \vert \cdot \vert\mathcal{N}^{\mathbbm{D}} \vert +  \delta N^3 \right) = \textbf{O}\left( \delta N^3  \right)$.

\end{appendices}

\bibliographystyle{IEEEtran}
\bibliography{IEEEabrv}

\end{document}